\documentclass[conference]{IEEEtran}

\makeatletter
\def\ps@headings{%
\def\@oddhead{\mbox{}\scriptsize\rightmark \hfil \thepage}%
\def\@evenhead{\scriptsize\thepage \hfil \leftmark\mbox{}}%
\def\@oddfoot{}%
\def\@evenfoot{}}
\makeatother
\pagestyle{headings}

\usepackage{amstext,bbm,tabls}
\usepackage{amssymb,amsthm}
\usepackage{amsfonts}
\usepackage{latexsym}
\usepackage[mathscr]{euscript}
\usepackage{amsmath}

\usepackage{graphics}
\usepackage{graphicx}
\usepackage{verbatim}
\usepackage{cite}
\usepackage{comment}
\usepackage{algorithm}
\usepackage{algorithmic}
\newcommand{\beq}{\begin{equation}}
\newcommand{\eeq}{\end{equation}}
\newcommand{\beqq}{\begin{equation*}}
\newcommand{\eeqq}{\end{equation*}}
\newtheorem{definition}{Definition}

\newtheorem{proposition}{Proposition}

\theoremstyle{remark}

\newcommand{\bi}{\begin{itemize}}
\newcommand{\ei}{\end{itemize}}
\newcommand{\E}{\mathbbm{E}}

\newcommand{\ls}[1]
 {\dimen0=\fontdimen6\the\font \lineskip=#1\dimen0
\advance\lineskip.5\fontdimen5\the\font \advance\lineskip-\dimen0
\lineskiplimit=.9\lineskip \baselineskip=\lineskip
\advance\baselineskip\dimen0 \normallineskip\lineskip
\normallineskiplimit\lineskiplimit \normalbaselineskip\baselineskip
\ignorespaces }
\ls{1.0}
\begin{document}
\title{A Game Theoretic Model for \\Network Virus Protection}
 \author{\IEEEauthorblockN{Iyed Khammassi\IEEEauthorrefmark{1}\IEEEauthorrefmark{2}, Rachid Elazouzi\IEEEauthorrefmark{1}, Majed Haddad\IEEEauthorrefmark{1} and Issam Mabrouki\IEEEauthorrefmark{2} }
\IEEEauthorblockA{\IEEEauthorrefmark{1}University of Avignon, 84000 Avignon, FRANCE\\
Email: firstname.lastname@univ-avignon.fr}
\IEEEauthorblockA{\IEEEauthorrefmark{2}University of Manouba Manouba, Tunisia\\
issam.mabrouki@hanalab.org}
}

\date{\today}
\maketitle
\begin{abstract}
The network virus propagation is influenced by various factors, and some of them are neglected in most of the existed models in the literature. In this paper, we study the network virus propagation based on the the epidemiological viewpoint. We assume that nodes can be equipped with protection against virus and the security of a node depends not only on his protection strategy but also by those chosen by other nodes in the network. A crucial aspect is whether owners of device, e.g., either smartphones, machines or tablets, are willing to be equipped to protect themselves or to take the risk to be contaminated in order to avoid the payment for a new antivirus. We model the interaction between nodes as a non-cooperative games where the node has two strategies: either to update the antivirus or not.
To this aim, we provide a full characterization of the equilibria of the game and we investigate the impact of the price of protection on the equilibrium as well as the efficiency of the protection at equilibrium. Further we consider more realistic scenarios in which the dynamic of sources that disseminate the virus, evolves as function of the popularity of virus. In this work, the interest in the virus by sources evolves under the Influence Linear Threshold (HILT) model.

\end{abstract}
\section{Introduction}\label{Introduction}

The Internet continues to grow exponentially and many applications continue to multiply on the Internet, with immediate benefits to end users. However, these network-based applications and services can pose security risks to devises. Recently many attacks have been launched against business, users and governments, that are attributed to some decentralized online communities acting anonymously in a coordinated manner. However, despite the important efforts spent by the many security companies, researchers, and government institutes, information systems’ security is still a great concern \cite{networksecurity},\cite{cloudsecurity}. One of important security risks is the propagation of some sophistical virus in the internet in which each infected node becomes a new source of infection. The problem of virus propagation has been studied through huge papers focusing on mainly on epidemic thresholds for real and realistic network and immunization policies \cite{immunization},\cite{Masuda},\cite{Omic}.
Many researchers have taken help of the biological system to study the behavior of spread of virus in a computer network and how to immune the computer system \cite{biology}. These epidemic models were very useful in network security modeling and immunization strategy.

To manage the network security, lots of efforts have been devoted to study virus propagation and characteristics \cite{propagationvirusNetworksecurity}.
To protect from the spread of virus in the network, nodes, e.g., either smartphones, machines or tablets, can use some anti-virus software with curing tools. In existing research, there are many perspectives to protect against network infections \cite{Infocomprotectiongame}. An important issue of protection is the frequency of the update in order to provide a protection against new virus. But many of the anti-virus applications are of server-client in which the system may provide the last update especially when there is an intensive epidemic of new virus. But an important fundamental source of difficulty in developing efficient protection is the difficulty to fully observe and control the network. As a consequence, full control and observability is impossible, leading to systems that are vulnerable to local as well as remote attacks. Other factor has to be considered when evaluating the security risk from virus, is the decentralized decision on the protection. Indeed, the protection against virus is typically autonomous nature of decision making in the network and the performance of such security is usually made under the assumption that nodes are willing to use an anti-virus for protection. But, a crucial aspect is whether owners of devices are willing to be equipped to protect them self or to take the risk to be contaminated in order to avoid the payment for a antivirus. Any successful security solution should consider those factors.

To date, the problem that attached the most attention from the research community is how efficiency the protection against the virus when the decisions are taken autonomously by nodes according to the cost of antivirus as well as the risk to be contaminated. In fact, the bigger the number of nodes equipped with protection, the lower the infection probability for a node without protection. We model the problem as non-cooperative games in order to establish the behavior of nodes against virus infection. Furthermore, we consider a source $S$ which propagates the virus through the network. We are concerned with spread of sources that disseminate the virus. In particular, we associate the dynamic of sources with the popularity of virus which is measured by the number of infected nodes at least one time. We model this influence process using
Homogeneous Influence Linear Threshold (HILT) model \cite{HILT}. The HILT model focuses on the threshold
behavior in influence propagation, which we can frequently
relate to — when enough of our friends bought a product, we may be influenced and converted to follow the same action. In our context, when a virus reaches some level of popularity, other sources may participate in dissemination of this virus. \\

The remainder of this paper is organized as follows. The epidemic model and the network game is introduced in section \ref{Network model}. In section \ref{Characterization of Equilibrium}, we study the dynamics and the different equilibrium properties of the proposed security game. Numerical illustrations of the system behaviors and the equilibrium characteristics are given in Section \ref{Numerical evaluation}. In order to enhance the readability of the paper, the mathematical proofs are given in Appendices.
\section{Network model}\label{Network model}

Consider a sparse network that consists of a large population of $N$ computer systems (CS) or nodes. We assume that the network is a complete graph where all computers can communicate with each others. A set of sources generates a virus with a rate $\mu_s$ and nodes in the network becomes susceptible to be infected by this virus. Each node $i$ chooses or not to be equipped in order to protect himself from the virus by paying a relative price of protection. Mixed strategies, i.e, probability distribution over the actions, are also possible. Obviously, all nodes have an incentive to protect themselves until the virus extinction. However, every antivirus update costs a price $U_c$. Hence, the strategy adopted by a node corresponds to a certain utility it receives and this utility depends on actions performed by $N$ nodes. Nodes with outdated antivirus are vulnerable to the virus spread process, and lose an infection cost $I_c$ if they were infected. An infected node can recover after a curing time using various tools (e.g., through a clean-up software). Under this setting, nodes shall immunise themselves during the period of the virus spread while minimizing the antivirus update cost.
\begin{figure}[t]

\hspace{2 cm}
\includegraphics[width=4cm,height=2.5cm]{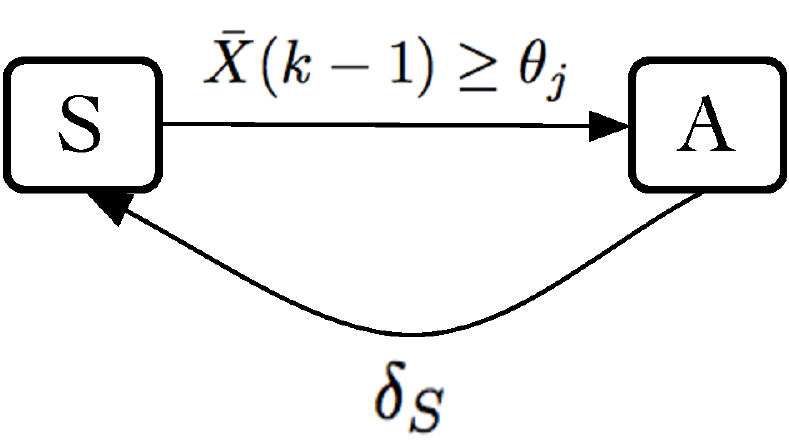}

\caption{Transition condition of activation state.} \label{fig:S2}
\end{figure}
 \subsection{Modeling active sources evolution}

 Assume that there are a set of sources that generate the virus. We associate the dynamic of active sources, which still disseminate the virus, with the dynamic of the number of infected nodes. In particular, we associate the dynamic of the number of active sources with the popularity of virus which is measured by the number of infected nodes at least one time. We model this influence process using Homogeneous Influence Linear Threshold (HILT) model \cite{HILT}. The HILT model focuses on the threshold behavior in influence propagation. The evolution of the number of active sources is modeled following process evolving in continuous time. Each source $j$ chooses a random threshold $\theta_j$ from an arbitrary threshold distribution with cumulative density function (c.d.f) $F$. Hence a source becomes active if the popularity, which measured by the number of infected nodes, exceeds its threshold. An active source disseminate the virus during a random time $T_m$, which follows the exponential distribution with rate $\delta_S$. Under the HILT model, the following proposition describes the dynamic of the number of actives sources at time $t$.
 \begin{figure}[t]

\hspace{2cm}
\includegraphics[width=4cm,height=3cm]{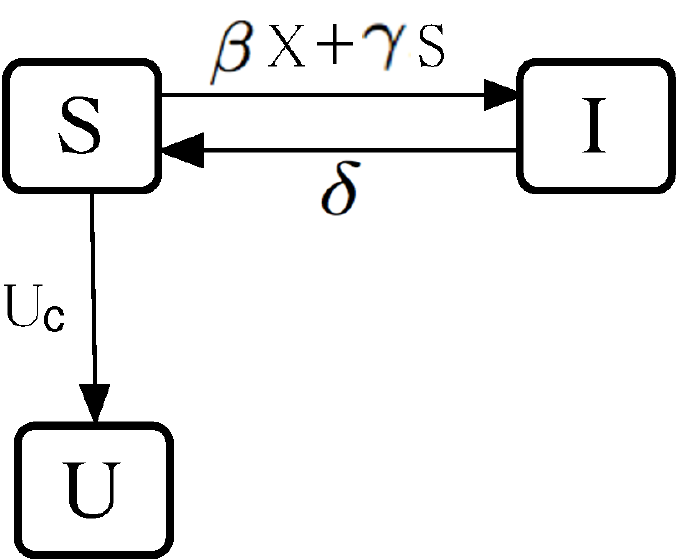}

\caption{Flow chart of states' transitions. } \label{fig:RD}
\end{figure}
\begin{proposition}
The dynamic of the number of active sources that disseminate the virus, is given by
$$\dot{S}(t)=- \delta_S S(t) + \lambda \frac{ \displaystyle f(\bar{X}(t))}{ \displaystyle 1-F(\bar{X}(t))} (N-S(t)),$$
where $\bar X(t)$ is the number of infected nodes till time $t$.
\end{proposition}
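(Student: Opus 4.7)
The plan is to write $\dot S(t)$ as the balance between the rate at which sources become active and the rate at which active sources cease disseminating, and to model each rate separately. The deactivation term is immediate: since each of the $S(t)$ currently active sources disseminates for an independent $\mathrm{Exp}(\delta_S)$ time, its instantaneous deactivation rate is $\delta_S$, and summing (in the mean-field sense) over all active sources contributes $-\delta_S S(t)$ to $\dot S(t)$.

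For the activation term I would use the hazard rate of the threshold distribution. Conditional on a tagged source $j$ not yet having activated by time $t$, its threshold satisfies $\theta_j>\bar X(t)$, so its conditional density is $f(y)/(1-F(\bar X(t)))$ for $y>\bar X(t)$. Because source $j$ activates the instant the popularity $\bar X(\cdot)$ crosses $\theta_j$, a residual-life / hazard argument gives
\begin{equation*}
\mathbb P\bigl(j\text{ activates in }[t,t+dt]\mid\text{not yet active}\bigr)=\frac{f(\bar X(t))}{1-F(\bar X(t))}\,d\bar X(t).
\end{equation*}
Writing $d\bar X(t)=\lambda\,dt$, where $\lambda$ is the effective rate at which popularity advances in the HILT specification, the per-source activation rate in time becomes $\lambda\,f(\bar X(t))/(1-F(\bar X(t)))$. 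Multiplying by the pool size $N-S(t)$ of non-disseminating sources and adding the deactivation contribution yields the stated ODE.

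To make this rigorous I would view $S(\cdot)$ as a pure-jump Markov process, identify the two contributions above as the drift of its infinitesimal generator, and appeal to a Kurtz-type fluid limit (or equivalently, equate expectation to value in the large-$N$ regime) in order to obtain the deterministic ODE claimed in the proposition.

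The main obstacle will be pinning down two modeling conventions that are easy to slip on. First, whether the hazard is read on the popularity scale $\bar X$ or the time scale $t$: this is exactly what forces the appearance of the constant $\lambda$, which converts an increment $d\bar X$ into a time increment $dt$. Second, whether the correct candidate pool is $N-S(t)$ rather than $N-A(t)$, with $A(t)$ the cumulative count of sources that have ever been active; the proposition's choice of $N-S(t)$ requires either the HILT convention that deactivated sources re-enter the candidate pool, or a mean-field approximation in which the already-deactivated fraction is treated as available for activation. Once these two conventions are fixed, the two rates combine into exactly the ODE displayed.
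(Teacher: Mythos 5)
Your proposal follows essentially the same route as the paper: a gain--loss balance with deactivation at rate $\delta_S$ per active source, activation governed by the hazard $f/(1-F)$ of the threshold distribution evaluated at the current popularity $\bar X(t)$, and the candidate pool taken to be $N-S(t)$. The paper's appendix does this via the discrete-time balance $S^N(t+\Delta t)=S^N(t)(1-\delta_S\Delta t)+\lambda\,\frac{F(\bar X^N(t+\Delta t))-F(\bar X^N(t))}{1-F(\bar X^N(t))}\,(N-S^N(t))$ and then lets $\Delta t\to 0$. The one place you diverge is precisely the time-scale conversion you flag as the main obstacle: you set $d\bar X(t)=\lambda\,dt$, which is inconsistent with the model's own dynamics $\dot{\bar X}(t)=(\beta X(t)+\gamma S(t))(N-k_U-X(t))$ --- the popularity does not advance at a constant rate, and $\lambda$ in the paper is an exogenous influence rate multiplying the conditional threshold-crossing probability, not the speed of $\bar X$. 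That said, the paper's own limit $\lim_{\Delta t\to 0}\frac{F(\bar X(t+\Delta t))-F(\bar X(t))}{\Delta t}=f(\bar X(t))$ silently drops the chain-rule factor $\dot{\bar X}(t)$, so a fully consistent derivation would in fact yield $\lambda\,h_F(\bar X(t))\,\dot{\bar X}(t)\,(N-S(t))$ for the activation term rather than the stated one. In short: same decomposition, same hazard idea, same candidate-pool convention; the only discrepancy is in how the popularity-scale hazard is converted to a rate in $t$, where both your patch and the paper's are heuristic, and yours at least makes the needed assumption explicit.
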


\begin{proof} See appendix.
\end{proof}
 \subsection{Modeling Infection Dynamic}

 To model the spreading process under the influence of a curing process, we choose the SIS (Susceptible Infected Susceptible) model, which is one of the most studied epidemic models \cite{SISVirusSpread,SISConductance}. A node at time $t$ can be in one out of two states: infected or susceptible. We assume that the curing process is a Poisson process with rate $\delta$, after which the node becomes healthy, but susceptible again to the virus. Let $X(t)$ denote the number of infected nodes and each infected node can infect healthy nodes with a contact rate $\beta$. Note that the total infection rate of a susceptible node is $\beta$ times the size of infected nodes.

 \subsection{The Security Game}
 We consider a security game, in which nodes choose individually whether or not to invest in the protection by updating their antivirus versions. Indeed, nodes prefer to not invest in antivirus update if the network is enough protected, i.e., there is enough nodes in the network that have chosen to update. Each node has two strategies: either to invest in the antivirus protection, i.e., pure strategy \emph{update} ($U$), or not to invest, i.e., pure strategy \emph{not update} ($NU$). Each strategy corresponds to certain playoff for the node. Notice that, a node may also be protected by the other nodes' update: the risk to be infected decreases with the number of antivirus activated throughout the network. Accordingly, the payoff of a node depends on the actions performed by the $N-1$ nodes. \\
 We denote by $V_j(k_U)$ the long term fitness of a computer, given that it plays the strategy $j$ $\in \{U, NU\}$, and that $k_U$ is the number of updated antivirus.
 The fitness is given by
\begin{equation}\label{eq:fitness}
 V_j(k_U) = \left\{\begin{array}{lr}
- U_c & j \in U \\
- P_i I_c & j \notin U
\end{array}
\right.\\
\end{equation}
where $P_i$ is the probability to be infected until the virus extinction.
 \section{Characterization of Equilibrium}\label{Characterization of Equilibrium}

The nodes which invest in the antivirus protection are directly immune to the virus. Therefore, they can not infect the other nodes or be infected. The dynamic of sources $S(t)$ depends on the the virus popularity. The activation process describes how the infectious nodes cause the inactive sources to become active.
 A source $j$ is influenced by the accumulative number of infected nodes $\bar{X}$ and it will be active when
 \begin{equation}\label{eq:Activationcondition}
 \bar{X}(t) \ge \theta
 \end{equation}
 We assume that a node contacts an active source with a rate $\gamma$. Before evaluating the dynamics of the the infected nodes $X^N(t)$, we study the dynamic of sources $S^N(t)$ under the activation process.\\
 A source $S$ is active when the number of infected nodes $\bar{X}^N(t)$ reaches the target value $\theta$. Let $\bar{X}(t)$ be the dynamic of infected nodes disregarding the curing process. The dynamics of $\bar{X}(t)$ is given by
 $$\dot{\bar{X}}(t)=(\beta X(t)+ \gamma S(t)) (N-k_U-X(t))$$
 \begin{proof} See appendix.
\end{proof}
 Recall that $S^N (t)$ is the set of active sources which participate in the infection process by time $t$, and $\delta_S S^N (t)$ is the set of the sources which are no longer interested to the virus and move from the active state to the susceptible state. A source is influenced by the accumulative infection process with a rate $\lambda$. By applying Condition (\ref{eq:Activationcondition}), we can write the sources dynamics as follows
   $$\dot{S}(t)=- \delta_S S(t) + \lambda \frac{ \displaystyle f(\bar{X}(t))}{ \displaystyle 1-F(\bar{X}(t))} (N-S(t))$$

 We know that the hazard function \cite{cox1962renewal} for the c.d.f $F(.)$ is given by $h_F(x)=\frac{f(x)}{1-F(x)}$ where $f(x)$ is the corresponding density function. Hence, the ordinary differential equation (ODE) becomes\\

  $$\dot{S}(t)=- \delta_S S(t) + \lambda h_F(\bar{X}(t)) (N-S(t))$$
  \begin{proof} See appendix.
\end{proof}
 The sources contact $(N-k_U-X^N(t))$ susceptible nodes with a rate $\gamma$.Therefore, we can write the dynamics of $X(t)$ as following

 $$\dot{X}(t) = - \delta X(t) + ( \beta X(t)+ \gamma S(t)) (N-k_U-X(t))$$

 This growth equation gives the dynamics of infected nodes under the sources activation process.
 All nodes aim to be enough protected during the lifetime of the virus. A node $i$ can be infected by a source or by another infected node. The infection probability $P_{i}$ is expressed as follows
 \begin{equation}\label{eq:infectionprobability}
 \,\,\,\,\,P_i(t)= 1 -\E\left[ \int_0^{\tau_c} e^{- (\beta x(t)+\gamma S(t))} dt\right]
 \end{equation}
 \subsubsection{Pure Nash Equilibrium}
 \begin{definition}
 At a Nash equilibrium (NE), no player can improve its fitness by unilaterally deviating from the equilibrium.\\
 \end{definition}
 For the proposed game a NE in pure strategies exists
if and only if the following two conditions are satisfied
 \begin{equation}\label{NEconditions}
\forall 1\le j \le N = \left\{\begin{array}{lr}
V_j(NU,k_U-1) \le \,\,\, V_j(U,k_U) \\
V_j(NU,k_U) \,\,\,\,\,\,\,\,\,\, \ge \,\,\, V_j(U,k_U+1)
\end{array}
\right.
\end{equation}
 We are interested in the existence and uniqueness of the pure NE which is characterized by the number $\Psi$ of players investing in the antivirus.\\
 A unique pure NE exists for the proposed security game when $V_j(NU, \psi)=V_j(U, \psi)$.
 \begin{proof} See appendix.
\end{proof}
 \subsubsection{Mixed Nash Equilibrium}

Let us now discuss the case when every node maintain a probability distribution over the two actions $(U,NU)$. The advantage of this mixed equilibrium compared to the pure one is that a node can invest in protection only for a fraction of the time and stay susceptible the rest of the time. This kind of equilibrium is more efficient for our case because we study a homogeneous population with fixed update and infection cost.
 In a mixed strategy game, a node $i$ can decide to invest in protection (playing $U$) with probability $p_i$ or keep protected only by his neighbors (playing $NU$) with probability $(1-p_i)$.\\
 $\textbf{p}=(p_1,p_2,\ldots,p_N), \forall i p_i \ge 0$, is the mixed strategy profile. For $p_i \notin \{0,1\}$ we have a fully mixed strategy profile. We note $(p_i,p_{-i})$ if node $i$ uses strategy $p_i$ and other use $p_{-i}=(p_1,\ldots,p_{i-1},p_{i+1},\ldots,p_N)$.\\
 We denote by $V_j(p,p_{-i})$ the playoff of a node $i$ which invest in antivirus with probability $p$.
\begin{definition}
A mixed strategy $p{^*_i} \in [0,1]$ is a NE if for each player $i$ (where $i=1,\ldots,N$) we have \begin{equation}\label{MNE}
U_i(p^*_1,\ldots,p^*_{i-1},p^*_i,p^*_{i+1},.,p^*_N)\ge U_i(p^*_1,\ldots,p^*_{i-1},p_i,p^*_{i+1},\ldots,p^*_N)
\end{equation}for every mixed strategy $p_i \in [0,1]$.\\
If $\forall i, p^*{_i} \notin \{0,1\}$ then we call $p^*$ fully mixed NE.
\end{definition}
Every finite strategic game has a mixed strategy NE\cite{nashequilibrium}.
There exists a unique fully mixed NE $p^*$ for the proposed game and it is solution of
\begin{equation}\label{eq:fullmixNE}
 \sum_{k=1}^{N} C^{N-1}_{k-1}(p^*)^{k-1} (1-p^*)^{N-k} V(U,k)=0
\end{equation}
\begin{proof} See appendix.
\end{proof}
 \subsubsection{Equilibrium with Mixers and Non-Mixers}
 The mixers are the players that choose a mixed strategy. We suppose that a part of the population chooses to play a pure strategy $U$ or $NU$ and the rest of the players are mixers. We will study the existence of the equilibrium in this case. Let $N_U \in \{0,1,\ldots,N\}$ be the number of players choosing the pure strategy $U$, and $N_{NU} \in \{0,1,\ldots,N\}$ be the number of players choosing the pure strategy $NU$.\\
 The $N-N_U-N_{NU}$ players use the mixed strategy. Let $p^* \in (0,1)$ be the probability with which the mixers choose the strategy $U$. Moreover, we denote by $V_U(N_U,N_{NU},p)$ the fitness of the node who updates its antivirus and $V_{NU}(N_U,N_{NU},p)$ the fitness for the node who does not update its antivirus.
 A necessary condition for the strategy $(N_U,N_{NU}, p^*)$ to be a NE (with at list one mixer) is that the mixer is indifferent whether it chooses a pure strategy $U$ or $NU$. This translates mathematically as follows\\
 \begin{equation}\label{eq:MNEcondition}
V_U(N_U+1,N_{NU},p^*) = V_U(N_U,N_{NU}+1,p^*)
\end{equation}
 \begin{proof} See appendix.
\end{proof}

 A unique NE of type $(N_U, N_{NU},p^*)$ exist for this case, and is solution of
\begin{multline}
\label{eq:mixernonmixerNE}
\sum_{k=0}^{N-N_U-N_{NU}} C^{N-N_U-N_{NU}}_{k-1} (p^*)^{k-1}  \\
\qquad \qquad \cdot (1-p^*)^{N-N_U-N_{NU}-k} V(U,N_U+k)=0.
\end{multline}

 We prove that this NE of type $(N_U, N_{NU},p^*)$ exists only for $N_U<\psi$ and $N_U+N_{NU} \le N-2$.
   \begin{proof} See appendix.
\end{proof}
  In this section, we have studied different NE types under the $S(t)$ activation process.
  We summarize the different Nash equilibrium types as following:
  \begin{itemize}
    \item \textbf{Pure Nash Equilibrium}: There exists a unique NE when the utility of $U$ is equal to the utility of $NU$ and we must update exactly $\psi$ nodes to get this equilibrium,
    \item \textbf{Mixed Nash Equilibrium}: A unique fully mixed NE $p^*$ exists and it is solution of Equation (\ref{eq:fullmixNE}),
   \item \textbf{Mixer and Non-Mixer Nash Equilibrium}: We characterize this equilibrium by a necessary condition (\ref{eq:MNEcondition}). A unique NE exists and it is solution of Equation (\ref{eq:mixernonmixerNE}). Moreover, we have proved that a NE exists only for $N_U < \psi$ and for $N_U+N_{NU} \le N-2$.
 \end{itemize}
   \section{Numerical evaluation
}\label{Numerical evaluation}
 In this section, we provide a numerical analysis of the performances of the proposed security game. We first evaluate the infection probability at the equilibrium. To do so we solve Equation (\ref{eq:fullmixNE}) to get the activation probability at the equilibrium. We show how the activation and the infection process depend on the system parameters, such as the number of nodes $N$ and the update cost $Uc$.
 \subsection{System characteristics}
 Fig. \ref{fig:Xevolution} and Fig. \ref{fig:Sevolution} illustrate the behavior of the infected nodes $X(t)$ and the sources $S(t)$ as function of the time for different activation probabilities $(0.01, 0.1, 0.5)$ and a contact rate $\beta=1\times 10^{-3}$. We further take $X(0)=0$ and $S(0)=5$. As expected, we remark a cause-effect phenomenon between the nodes and sources. The number of infected nodes increases as a result of the virus spread till reaching a given infection rate. Then, when the virus popularity reaches a certain level, the participating in the virus spread increases yielding an increase in the number of sources.\\
  In the proposed game model, activation probability is a fundamental parameter and is related to how many nodes install the new anti-virus software. To analyze effects of the activation probability, $p$ is set to three different values $(0.01, 0.1, 0.5)$.
 Fig. \ref{fig:Xevolution} shows that the number of infected nodes $X(t)$ slightly fluctuates before reaching a stable (absorbing) state. In general, the higher the activation probability is, the faster $X(t)$ decreases. This is due to the fact that increasing the activation probability implies a decrease in the risk of being infected for susceptible nodes. Notice that, depending on the activation probability, the virus may disappear completely or become scars. We will discuss this point later in the paper.\\
 \begin{figure}[t]

\hspace{0cm}
\includegraphics[width=9cm,height=5cm]{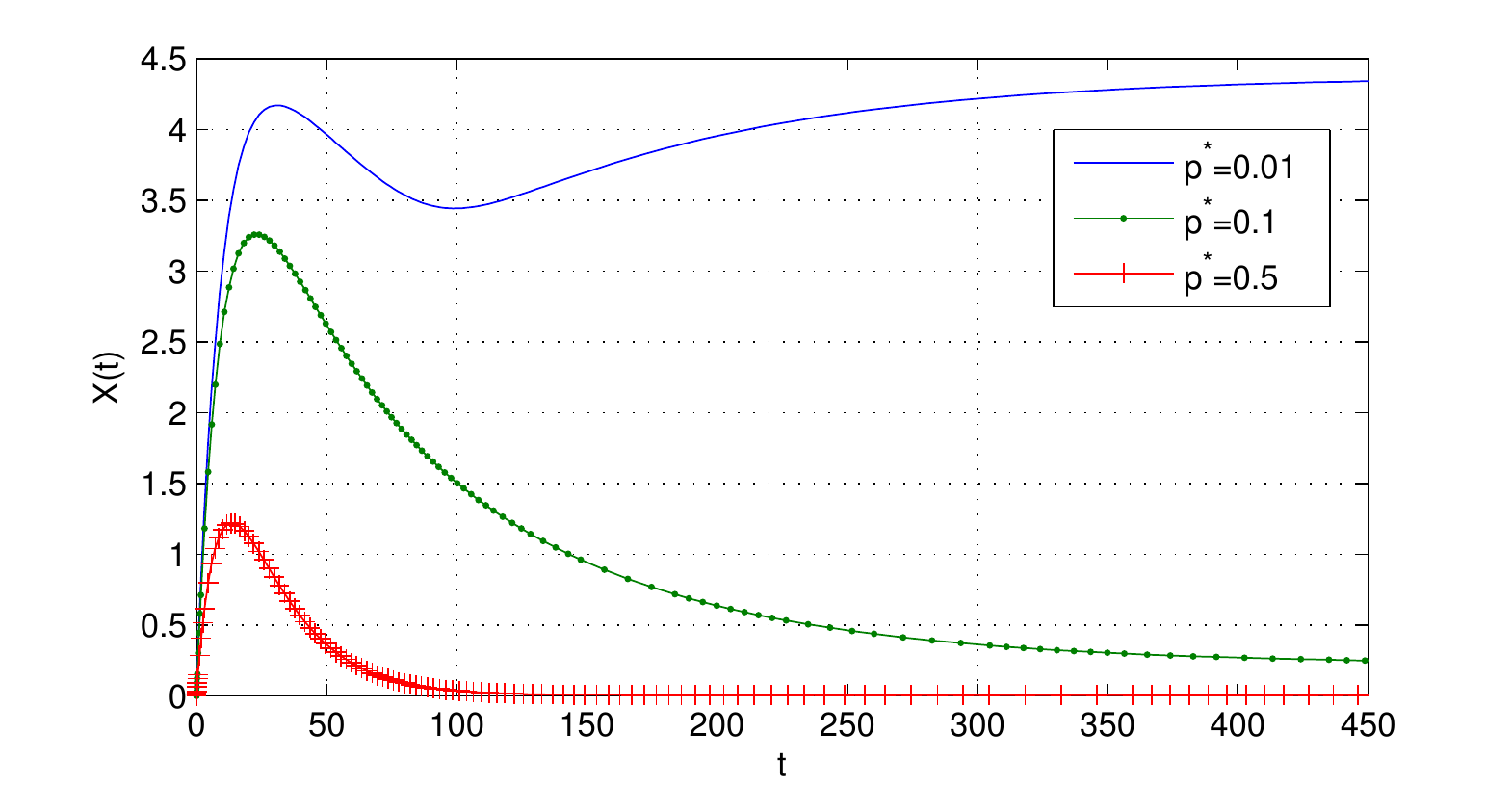}

\caption{The infection process for different activation probabilities $p$, where $N=100$, $N_s=50$, $\beta=1 \times10^{-3}$, $\gamma=1 \times10^{-3}$, $\delta=1 \times10^{-1}$, $\delta_S=1 \times10^{-1}$, $\lambda=5\times10^{-6}$, $X(0)=0$, $S(0)=5$, $Ic=1$ and $Uc=0.1$.} \label{fig:Xevolution}
\end{figure}
 \begin{figure}[t]

\hspace{0cm}
\includegraphics[width=9cm,height=5cm]{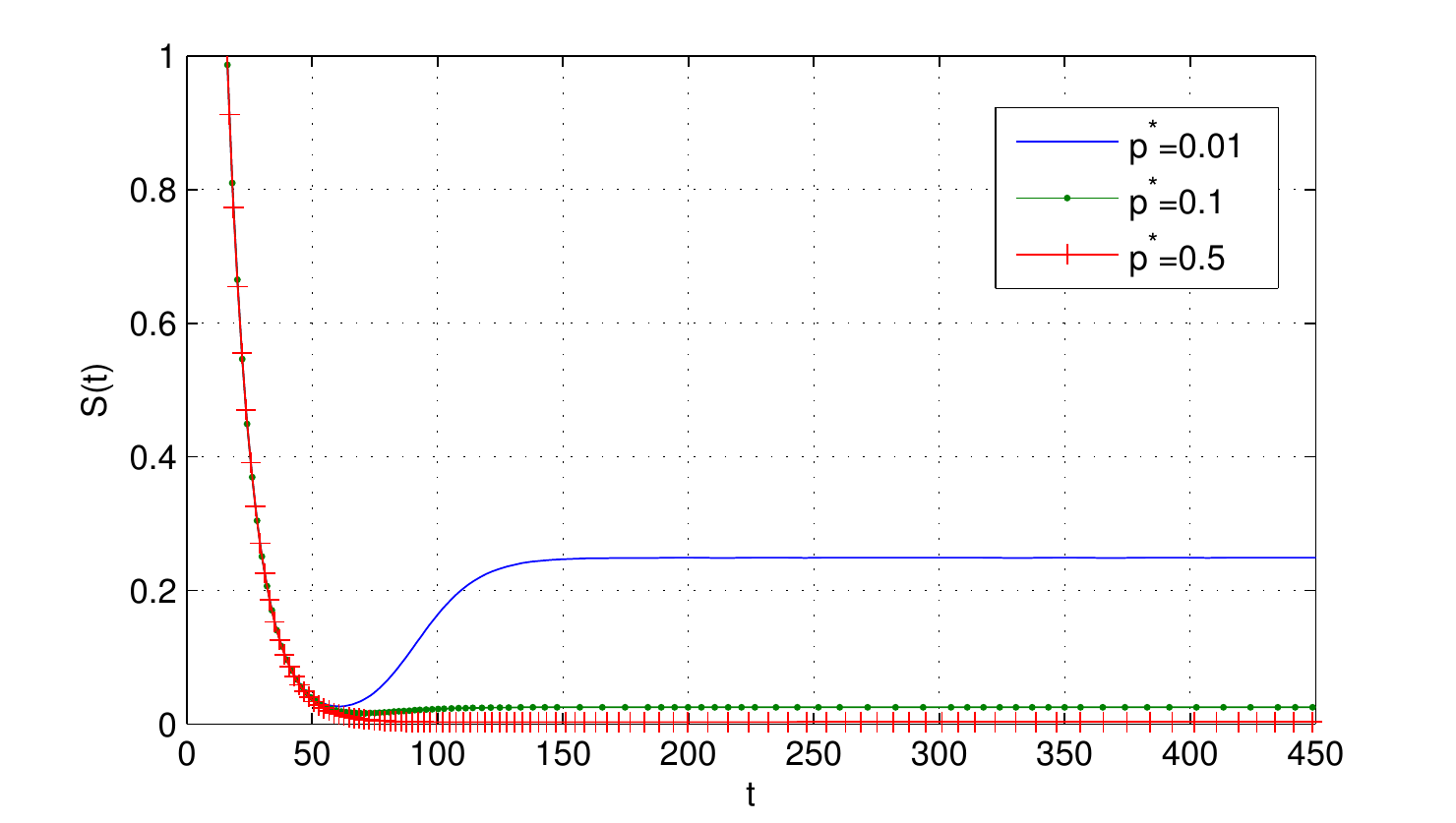}

\caption{Sources behaviour for different activation probabilities.} \label{fig:Sevolution}
\end{figure}
 Fig. \ref{fig:Sevolution} depicts the dynamics of the interested sources in the virus spread for different $p$. We clearly notice that, for low activation probability values, e.g., $p=0.01$, $S(t)$ decreases until the virus popularity reaches a target value. When the activation probability increases to $p=0.5$, we can see that the number of sources are decreasing gradually to vanish eventually.\\

 \begin{figure}[t]

\hspace{0cm}
\includegraphics[width=9cm,height=5cm]{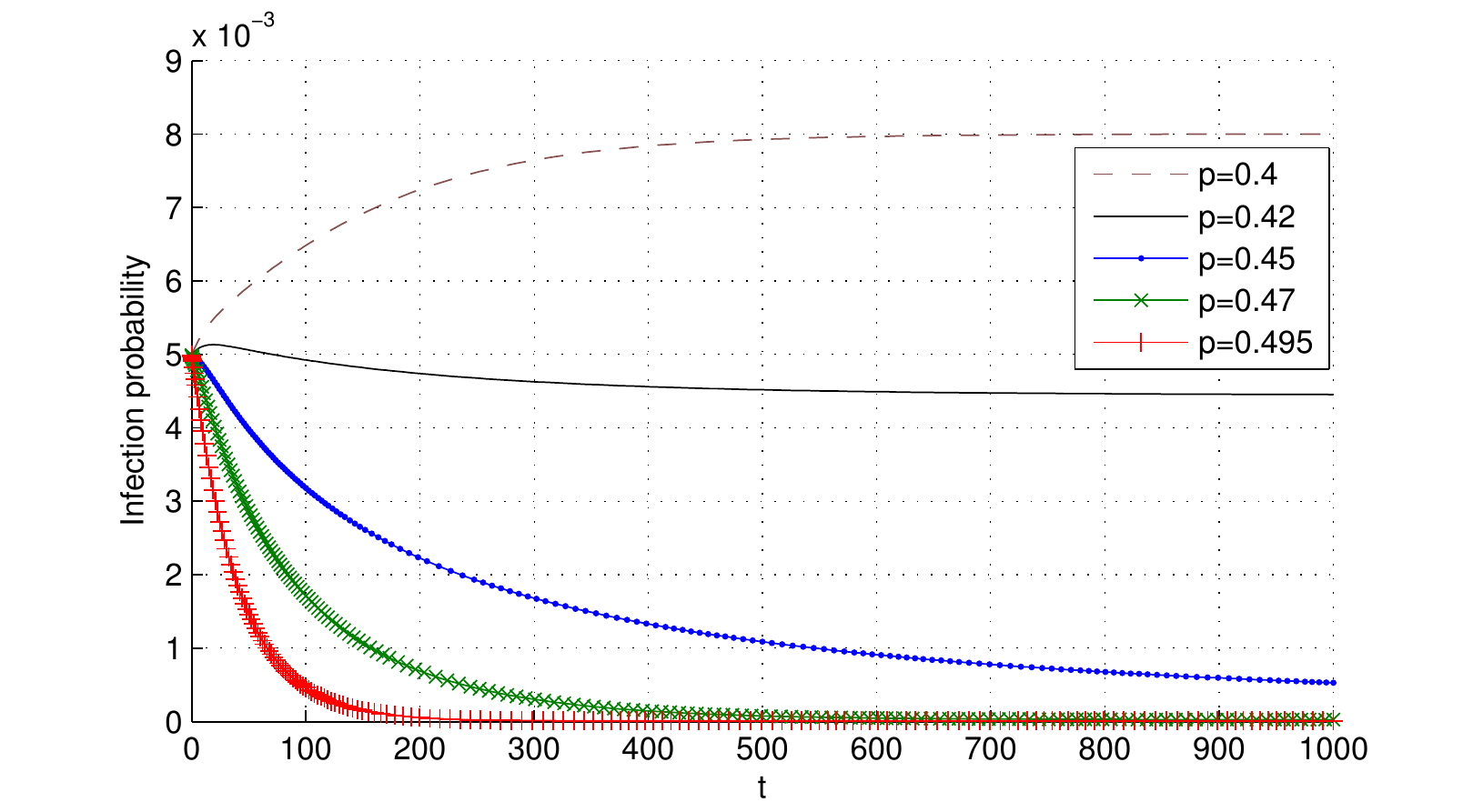}

\caption{The infection probability for different activation probabilities, where $N=100$, $\beta=1 \times10^{-3}$, $\gamma=1 \times10^{-3}$, $\delta=1 \times10^{-1}$, $\delta_S=1 \times10^{-1}$, $\lambda=1\times10^{-4}$, $X(0)=0$, $S(0)=5$, $Ic=1$ and $Uc=0.1$.} \label{fig:infectionprobabilitychangingp}
\end{figure}

Fig. \ref{fig:infectionprobabilitychangingp} illustrates the time evolution of the infection probability for different activation probabilities $p$. We remark that, from $p= 0.495$, the infection probability monotonically decreases till completely vanishing at $t=230$. This suggests that using an activation probability higher that $0.495$ is worthless as, from $p= 0.495$, the virus is going to disappear in any case.
\\
 Unless otherwise stated, we will use the following parameters:
$N=500$, $N_s=50$, $\beta=1 \times 10^{-4}$, $\gamma=1 \times 10^{-3}$, $\delta=0.1$, $\delta_S=0.1$, $\lambda=1 \times 10^{-4}$, $X(0)=0$, $S(0)=10$, $Ic=1$ and $Uc=0.1$. We notice by $t_f$ the time of epidemic extinction corresponding to the time for which we have $X(t)=0$. For this parameters, a virus extinction time $t_f$ exists and we can compute the infection probability in $[0,\ldots,t_f]$.
 \subsection{System Performances at the Equilibrium }
  \subsubsection{Fully Mixed equilibrium }
 Let us now evaluate the performances of the proposed security game. We characterize the equilibrium in Section \ref{Characterization of Equilibrium} by solving the polynomial Equation in (\ref{eq:fullmixNE}). We solve (\ref{eq:fullmixNE}) to get a unique solution $p^*=0.29$ at the equilibrium.
 \subsubsection{Mixer and Non-Mixer Equilibrium }
 For this case, we have $N_U$ nodes that always update their antivirus (pure strategy update) and $N_{NU}$ that never update their antivirus (pure strategy not update). The $N-N_{NU}-N_U$ mixers update the antivirus with an activation probability $p$. Verifying Condition (\ref{eq:MNEcondition}), we vary the $N_U$ and the $N_{NU}$ to find the activation probability for the mixers at the equilibrium. For $N=500$ and $Uc=0.1$, we get $p^*=0.19$ for all the mixers when $N_U=50$ and $N_{NU}=70$.
 \subsection{Effects of network size}
 To proceed further with the analysis, we resort to evaluate the impact of the network size on the system behavior at the equilibrium.
\begin{figure}[t]

\hspace{0cm}
\includegraphics[width=9cm,height=5cm]{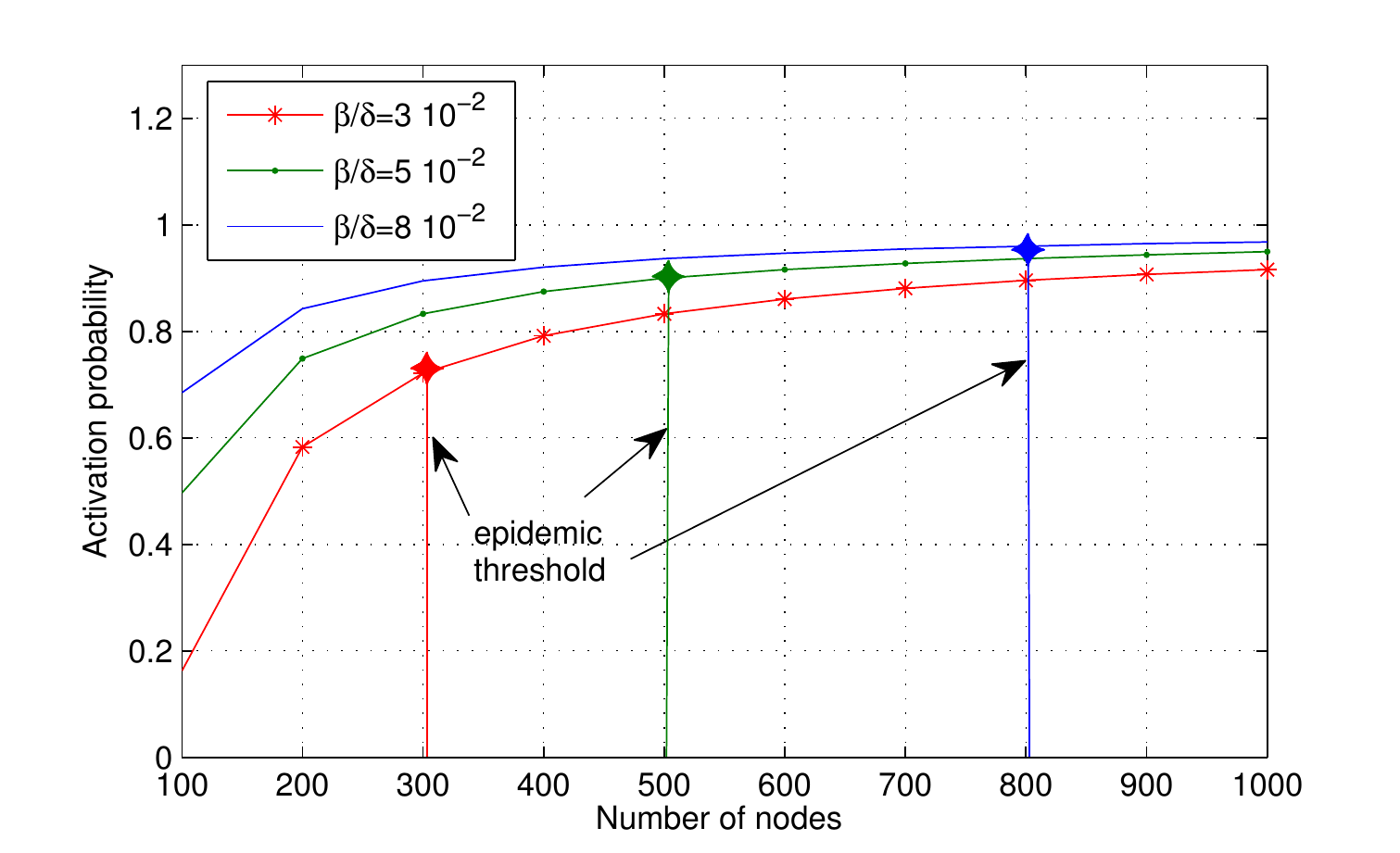}

\caption{The activation probability for increasing number of nodes.} \label{fig:activationN}
\end{figure}

In Fig. \ref{fig:activationN}, we plot the activation probability as function of the network size. It is clearly shown that the activation probability increases when the network is larger which is somehow obvious as a larger network tends to provide greater risk of infection among nodes.
On the other hand, we know from \cite{EpidemicThreshold} that the relation between the epidemic threshold $\tau_c$ and the transmission to disease-induced mortality ratio  $\frac{\beta}{\delta}$ translates to the following conditions:\\
 \bi
 \item if $\frac{\beta}{\delta}<\tau_c$, the virus dies out over time,
\item if $\frac{\beta}{\delta}>\tau_c$, the virus survives and the infection becomes an epidemic.
 \ei
 In the particular case of a complete graph, the epidemic threshold is given by $\tau_c=\frac{1}{N-1}$. Thus, in the proposed network, the infection dies when $N< \frac{\delta}{\beta}+1$. Increasing $\frac{\beta}{\delta}$, the number of nodes satisfying the epidemic threshold is larger. This result gives incentive to companies to manage their network parameters ($\beta$ ,$\delta$, $N$) so that the infection dies out over time. \\
 To evaluate the performance of the equilibrium, we compare, in Fig. \ref{fig:gain}, the gain $G=\frac{U_c N - p^* U_c N}{U_c N}$ to simpler policies where one may activate a given percentage of the nodes in the system depending on the general policy of the company. For instance, a conservative policy is more likely to update $90\%$ of the nodes, whereas a lax policy is more likely to update $10\%$ of the entire nodes in the system.\\
 So far, we have been  interested in the influence of network parameters ($\beta$, $\delta$, $N$) on the company server's security management. Now, we will study how the network parameters influence the antivirus producers decision.

\begin{figure}[t]

\hspace{0cm}
\includegraphics[width=9cm,height=5cm]{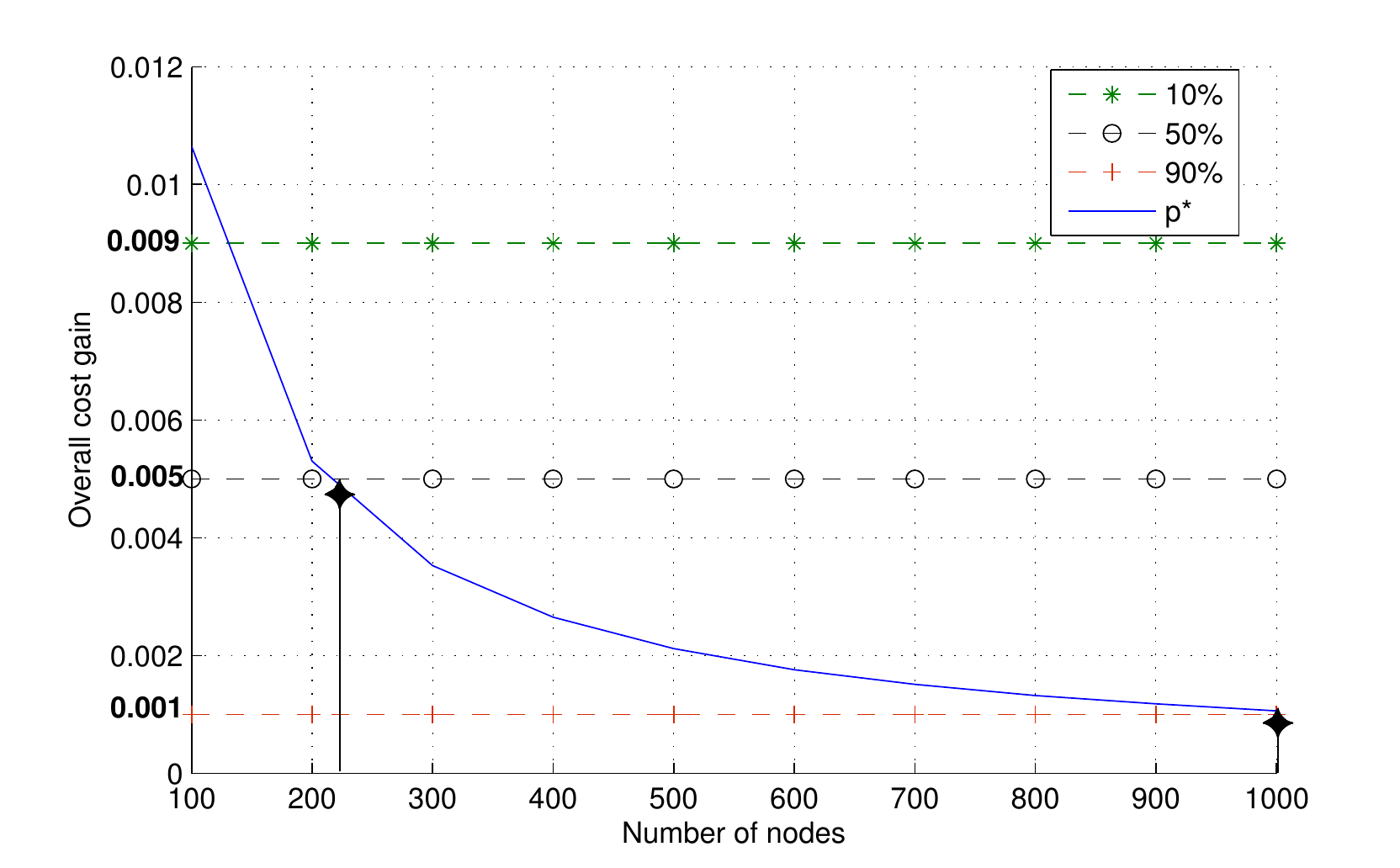}

\caption{The overall cost gain as function of the number of nodes.} \label{fig:gain}
\end{figure}
 \subsection{Effects of update cost}
 \begin{figure}[t]

\hspace{0.3cm}
\includegraphics[width=8cm,height=5cm]{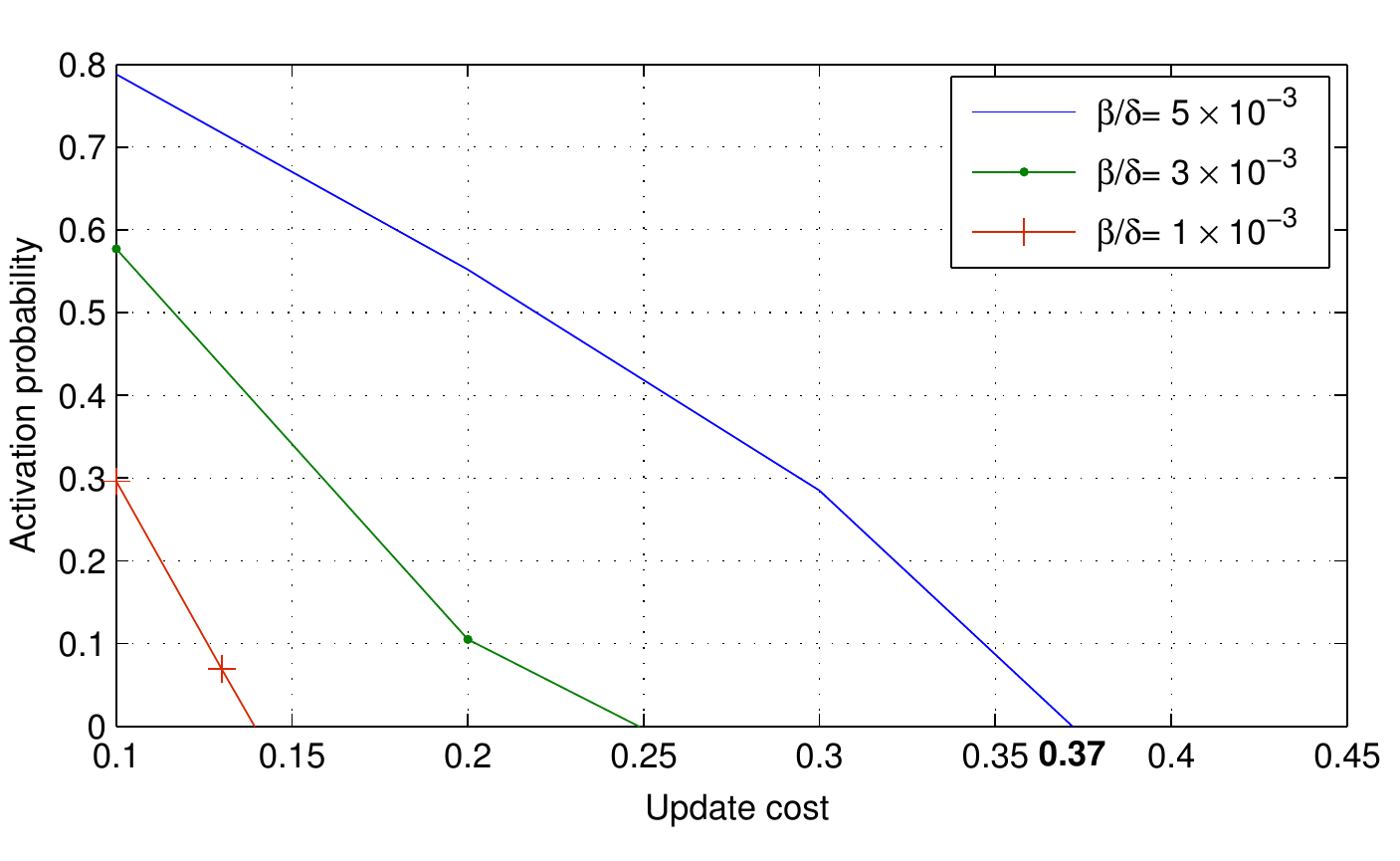}

\caption{The activation probability as function of the update cost. } \label{fig:activationCost}
\end{figure}
Let us now study the impact of the update cost on the system behavior at the equilibrium.\\
Fig. \ref{fig:activationCost} gives the time evolution of the activation probability considering different values of $\beta/\delta$. We find the update cost $U_c^*$ for which the probability of activation is equal to $0$. This specific value $U_c^*$ is very important for the antivirus producers to manage the $U_c$, as approaching $U_c$ the this limit value $U_c^*$. The $U_c^*$ increases with $\beta/\delta$, as the risk of infection increases with $\beta/\delta$.

\begin{figure}[t]

\hspace{0cm}
\includegraphics[width=9.3cm,height=5cm]{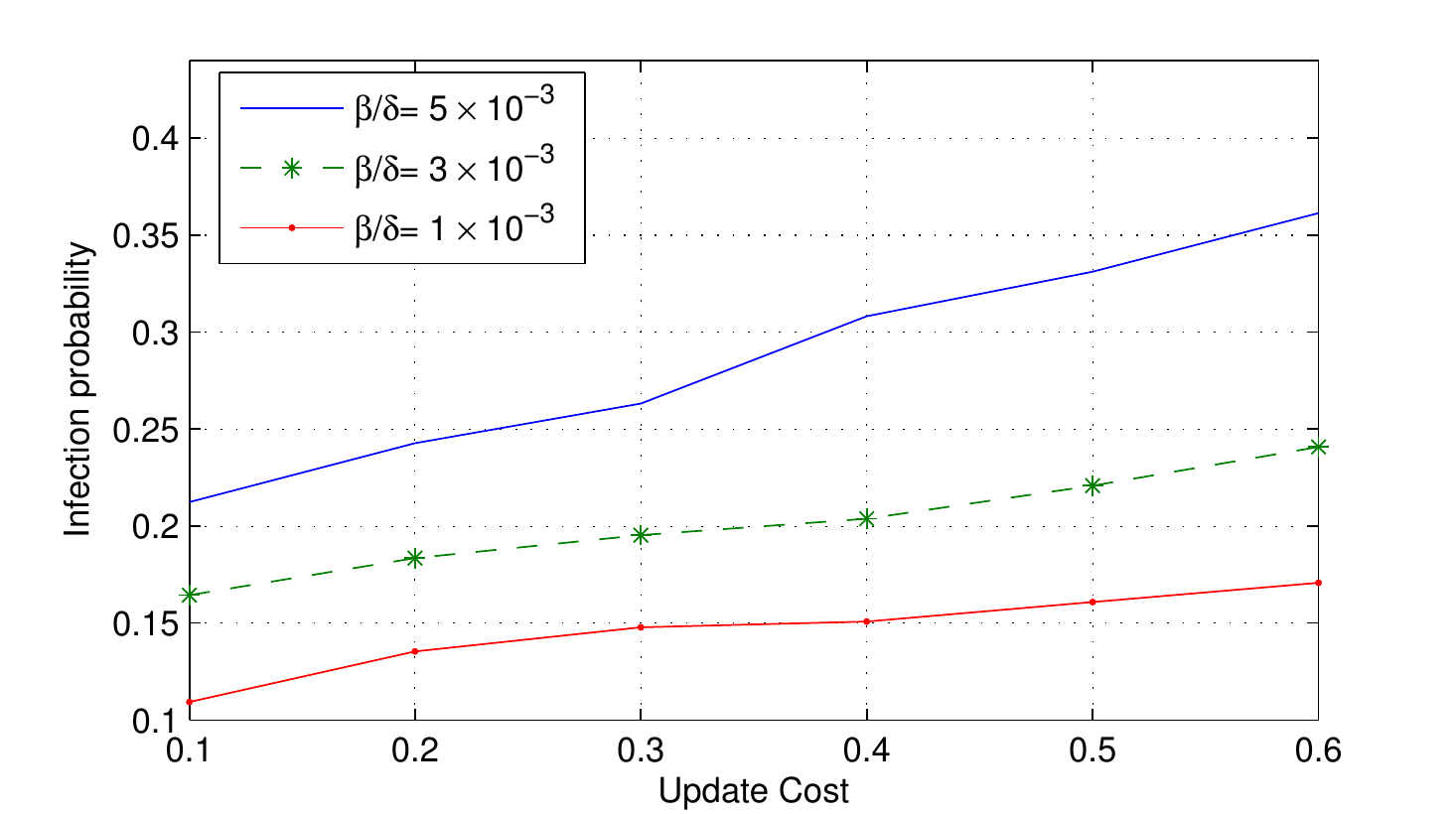}

\caption{The infection probability as function of the update cost.} \label{fig:infectionCost}
\end{figure}
Fig. \ref{fig:infectionCost} illustrates the infection probability in $[0,\ldots,t_f]$ as function of the update cost. At the equilibrium, the infection probability increases with the update cost. This is justified by the fact that the number of nodes participating in the antivirus activation decreases as the update cost increases, yielding a higher infection risk.
 \begin{figure}[t]

\hspace{0cm}
\includegraphics[width=9cm,height=5cm]{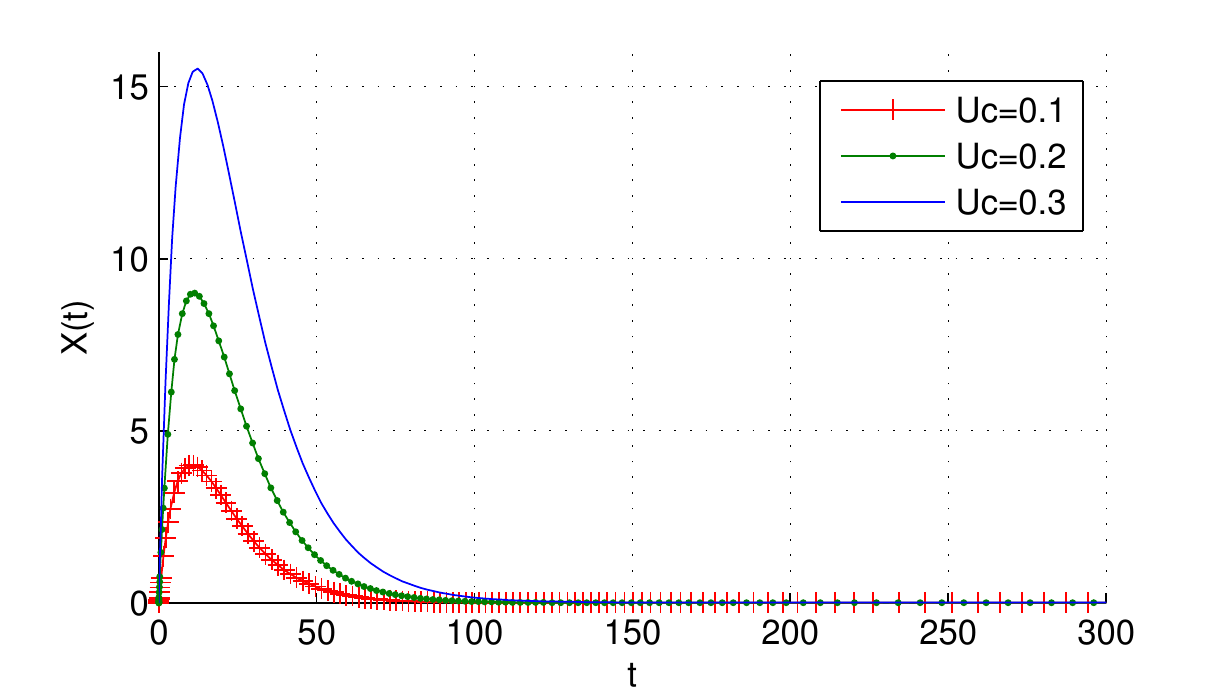}

\caption{Infected nodes for different update costs with $N=100$, $N_s=50$, $\beta=1 \times 10^{-4}$, $\gamma=1 \times 10^{-3}$, $\delta=0.1$, $\delta_S=0.1$, $\lambda=1 \times 10^{-4}$, $X(0)=0$, $S(0)=10$ and $Ic=1$.} \label{fig:XvariationCost}
\end{figure}
In Fig. \ref{fig:XvariationCost}, we plot the infected node evolution in time at the equilibrium considering different update costs $U_c$. We observe that $X(t)$ increases until a certain $t^*$ and then it decreases till the epidemic extinction, i.e., when $X(t)=0$. It is shown that the more the update cost is lower, the more the $t^*$ and the epidemic extinction are lower. This result helps the antivirus producers to manage the update cost. Increasing the $U_c$, we have more infected nodes in the network, but reaching $U_c^*$, no node is interested in the update. The antivirus producers are interested to reach a target performance of update cost to maximize the gain.

\section{CONCLUSION
}\label{CONCLUSION}

We have studied a game theoretic model for network virus protection under an activation process. The virus spread dynamics is modelled as an epidemic process. The strategy of each node is either to update or not to update the antivirus. We have established the existence and the uniqueness of different types of Nash equilibria. We have studied the dynamic of both the infection process and the activation process. Both the company server's security management and the antivirus producer's strategies have been addressed. It has been shown that, depending on the network topology, the companies have incentive to manage their network parameters in order to ensure that the infection dies out finite-time horizons.
\section*{Appendix}
\subsection{Dynamic of infected nodes}
 $X^N(t+\Delta t) = X^N(t) ( 1 - \delta \Delta t) \\+ (\beta^N X^N(t)+ \gamma^N S^N (t)) (N-U-X^N(t)) \Delta t$
 \\
 \\
 $ \frac { \displaystyle X^N(t+\Delta t) - X^N(t)}{ \displaystyle \Delta t}= - \delta X^N(t) \\+( \beta^N X^N(t)+ \gamma^N S^N (t)) (N-U-X^N(t))$\\

  We evaluate this expression for $\Delta t$ $\rightarrow$ 0 to get:\\

 $\dot{X}(t) = - \delta X(t) + ( \beta X(t)+ \gamma S(t)) (N-U-X(t))$\\

 $\bar{X}^N(t+\Delta t) = \bar{X}^N(t) \\+ (\beta^N X^N(t)+ \gamma^N S^N(t)) (N-U-X^N(t)) \Delta t$\\

  $ \frac { \displaystyle \bar{X}^N(t+\Delta t) - \bar{X}^N(t)}{ \displaystyle \Delta t}= ( \beta^N X^N(t)+ \gamma^N S^N (t)) (N-U-X^N(t))$\\
 We evaluate this expression for $\Delta t$ $\rightarrow$ 0 to get:\\
 $\dot{\bar{X}}(t)=(\beta X(t)+ \gamma S(t)) (N-U-X(t)) $\\
\subsection{Dynamic of the active sources}
 $S^N (t+\Delta t) = S^N (t) ( 1 - \delta_S \Delta t) \\+ \lambda \frac{\displaystyle F(\bar{X}^N(t+\Delta t))-F(\bar{X}^N(t))}{ \displaystyle1-F(\bar{X}^N(t))} (N_H-S^N (t))$\\
 \\
 \\
 $\frac {\displaystyle S^N (t+\Delta t)-S^N (t)}{ \displaystyle \Delta t}= -\delta_S S^N (t) \\+ \lambda \frac{ \frac{ \displaystyle F(\bar{X}^N(t+\Delta t))-F(\bar{X}^N(t))}{ \displaystyle \Delta t}}{ \displaystyle 1-F(\bar{X}^N(t))} (N_H-S^N (t))$\\
 Let $f(\cdot)$ be the density function of $F(\cdot)$, we have\\
 $\lim\limits_{\Delta t \to 0}$ $\frac{ \displaystyle F(\bar{X}^N(t+\Delta t))-F(\bar{X}^N(t))}{ \displaystyle \Delta t}$=$f(\bar{X}(t))$\\
 We evaluate this expression for $\Delta t \rightarrow 0$:\\
 $\dot{S}(t)=- \delta_S S(t) + \lambda \frac{ \displaystyle f(\bar{X}(t))}{ \displaystyle 1-F(\bar{X}(t))} (N-S(t))$\\

 We know that the hazard function \cite{cox1962renewal} for the c.d.f $F(.)$ is given by $h_F(x)=\frac{f(x)}{1-F(x)}$ where $f(x)$ is the corresponding density function. Hence, the ODE can be written as following,\\

  $\dot{S}(t)=- \delta_S S(t) + \lambda h_F(\bar{X}(t)) (N-S(t))$\\

\subsection{Pure Nash equilibrium}
 $P_{i}(k_U)$ is decreasing in the number of updates $K_U$:\\
  $P_{inf}( \psi-1) \ge P_{inf}( \psi) $\\
  $I_c > 0$ and fixed for all players so:\\
  $-I_c P_{inf}( \psi-1) \le-I_c P_{inf}( \psi)$\\
  $-I_c P_{inf}( \psi-1) \le V_j(NU, \psi)=V_j(U, \psi)$\\
  $V_j(NU, \psi-1) \le V_j(U, \psi)$\\
  which is the first Condition (\ref{NEconditions}.1).\\
  $V_j(NU, \psi)=V_j(U, \psi)=-U_c=V_j(U, \psi+1)$\\
  and we have the second Condition (\ref{NEconditions}.2).\\
     We conclude that $V_j(NU, \psi)=V_j(U, \psi)$) is a pure NE. Hence, a NE at $\psi$ exists for the proposed game.\\

   Second, we will prove by contradiction the uniqueness of $\psi$ at the NE.\\
 Let $k_U > \psi$, we have: \\ $ V_j(NU, k_U-1 ) > V_j(U, k_U-1 ) = -U_c = V_j(U, k_U)$ \\
 $ V_j(NU, k_U ) > V_j(U, k_U)$, so that (\ref{NEconditions}.1) fails. \\
Let $K_u < \psi$, we have: \\ $ V_j(NU, k_U+1 ) < V_j(U, k_U+1 ) = -U_c = V_j(U, k_U)$\\
$ V_j(NU, k_U+1 ) < V_j(U, k_U)$, so that (\ref{NEconditions}.2) fails. \\
 The existence and the uniqueness of NE means that the proposed security game has only one stable point at $\psi$ when $V_j(NU, \psi) = V_j(U,\psi)$.
\subsection{Mixed Nash equilibrium}
Let $p$ the symmetric mixed strategy adopted by every node in the proposed game, $p_i=p$, $\forall i$.
The new fitness of every node will be given by:\\
$V_j(p^*_i,p_{-i})= p_i^* \sum_{k=1}^{N} C^{N-1}_{k-1} p_{-i}^{k-1} (1-p_{-i})^{N-k} V(U,k) + (1-p_i^*) \sum_{i=1}^{N} C^{N-1}_{k-1} p_{-i}^{k-1} (1-p_{-i})^{N-k} V(NU,k)$\\
We have: $V(NU,k)$=$- V(U,k)$ so,\\
$V_j(p^*_i,p_{-i})$= $(2p^*_i - 1) \sum_{k=1}^{N} C^{N-1}_{k-1} p_{-i}^{k-1} (1-p_{-i})^{N-k} V(U,k)$\\
=$ (2p^*_i - 1) D(N,p_i) $\\
with $D(N,p_i) = \sum_{k=1}^{N} C^{N-1}_{k-1} p_{-i}^{k-1} (1-p_{-i})^{N-k} V(U,k)$\\
A mixed strategy is obtained when $D(N,p_i) = 0$ because for $D(N,p_i) < 0$ the best response for the node $i$ is $p_i=0$ and $p_i=1$ is a best response when $D(N,p_i) > 0$.
We characterize the mixed equilibrium by
\begin{equation}\label{eq:equilibriumcharactersticpolynome}
D(N,p_i) = \sum_{k=1}^{N} C^{N-1}_{k-1} p_{-i}^{k-1} (1-p_{-i})^{N-k} V(U,k)=0
\end{equation}
We have $D(N,0)= V(U,1)$ and $D(N,1)=V(U,N)$. From (\ref{eq:infectionprobability}), the infection probability is decreasing with the number of active antivirus.
 the fitness of updated system $U$ is decreasing with the number $k_U$ of updates. We get $V(U,N)<V(U,1)$,
$D(N,0)> D(N,1)$.Thus, $D(N,p_i)$ is strictly decreasing with $p$. Hence, the mixed symmetric NE is unique.

\subsection{Mixers and no-Mixers Nash Equilibrium}
\subsubsection{Equilibrium condition}
To prove the necessary condition to have a NE in our case, we have to show that under this condition the Non-mixes can't improve their fitness with diving from the equilibrium.
 We have $V_U(N_U,N_{NU},p)$ is decreasing with $N_U$.\\
 $V_U(N_U,N_{NU},p^*) \ge V_U(N_U+1,N_{NU},p^*)$ and from (\ref{eq:MNEcondition})\\
 $V_U(N_U+1,N_{NU},p^*)=V_{NU}(N_U,N_{NU}+1,p^*)$ \\
  $V_U(N_U,N_{NU},p^*) \ge V_{NU}(N_U,N_{NU}+1,p^*)$\\
 $\ge V_{NU}(N_U-1,N_{NU}+1,p^*)$\\
 $\ge p^*V_U(N_U,N_{NU},p^*) + (1-p^*)V_{NU}(N_U-1,N_{NU}+1,p^*)$\\
 Therefore, under Condition (\ref{eq:MNEcondition}) a player who plays $U$ strategy cannot improve his fitness by deviating from the strategy $(N_U,N_{NU},p^*)$ using any strategy $p^* \in [0,1[$.\\
\subsubsection{Equilibrium}
 We have proved that $k_U$=$\psi$ is a pure NE. \\
 Suppose that there is at least one player updates its antivirus. Let $p=0^+$, reps. $p=1^-$ be the mixed strategy infinitely close to zero, reps. to one. We have $V_U(N_U,N_{NU},p^*)$ is strictly decreasing with $N_U$ and $p^*$. We can write (\ref{eq:MNEcondition}) as follows\\
 $$V_{U}(N_U+1,N_{NU},0^+) > V_{NU}(N_U,N_{NU}+1,0^+)$$
  $$V_{U}(N_U+1,N_{NU},1^-) \le V_{NU}(N_U,N_{NU}+1,1^-)$$
 If $N_U \ge \psi$, then\\
 $V_{U}(N_U+1,N_{NU},0^+) \le V_{NU}(N_U,N_{NU}+1,0^+)$ wich does not satisfy Condition (\ref{eq:MNEcondition}) to have a NE.\\
  A NE exists for $l < \psi$.\\
 For $N_U+N_{NU}+1 > N-1$, there is also no NE.\\
 We suppose that we have $l < \psi$, let $N_U+N_{NU}+1= N$ (only one node plays mixed strategy) then $\forall p$,\\
 $V_{U}(N_U+1,N_{NU},p)= C_1$ and
  $V_{NU}(N_U,N_{NU}+1,p)= C_2$ \\
   Moreover $V_{U}(N_U,N_{NU},p)$ is strictly decreasing with $N_U$, we have $C_2 > C_1$ which contradicts Condition (\ref{eq:MNEcondition}).\\
   We conclude that the NE of type $(N_U, N_{NU},p^*)$ exists only for $N_U<\psi$ and $N_U+N_{NU} \le N-2$.
  \bibliographystyle{ieeetr}
 \bibliography{sis_bib}
 \end{document}